\newcommand\LL[1]{\multicolumn{1}{|c}{#1}}
\newcommand{\ignore}[1]{}
\newtheorem{theorem}{Theorem}
\newtheorem{corollary}[theorem]{Corollary}
\newdefinition{remark}{Remark}
\newproof{proof}{Proof}
\journal{Journal of Complexity}
\begin{document}

\begin{frontmatter}

\title{Minmax optimal list searching with $\log \log n$ average cost}


\author[address_1]{Ivo F. D. Oliveira}
\ead{ivodavid@gmail.com}

\author[address_2]{Ricardo H. C. Takahashi\corref{mycorrespondingauthor}}
\cortext[mycorrespondingauthor]{Corresponding author}
\ead{taka@mat.ufmg.br}

\address[address_1]{Institute of Science, Engineering and Tecnology of the Federal University of the Valleys of Jequitinhonha and Mucuri, Te\'{o}filo Otoni, Minas Gerais, Brazil }
\address[address_2]{Department of Mathematics of the Federal University of Minas Gerais, Belo Horizonte, Minas Gerais, Brazil}

\begin{abstract}
We find a searching method on ordered lists that surprisingly outperforms binary searching with respect to average query complexity while retaining minmax optimality. The method is shown to require $O(\log_2 \log_2 n)$ queries on average while never exceeding $\lceil\log_2 n\rceil$ queries in the worst case, i.e. the minmax bound of binary searching. Our average results assume a uniform distribution hypothesis similar to those of previous authors under which the expected query complexity of interpolation search of $O(\log_2 \log_2 n)$ is known to be optimal. Hence our method turns out to be optimal with respect to both minmax and average performance. We further provide robustness guarantees and perform several numerical experiments with both artificial and real data. Our results suggest that time savings range roughly from a constant factor of 10\% to 50\% to a logarithmic factor spanning orders of magnitude when different metrics are considered.
\end{abstract}

\begin{keyword}
binary searching \sep interpolation search \sep sorted lists \sep minmax optimality
\MSC[2020] 68P10 \sep 68W40 \sep 68Q25
\end{keyword}

\end{frontmatter}

\section{Introduction}

Given a sorted list $\boldsymbol{v}$ and a target value $z$, the problem of searching sorted lists is typically stated as:
\begin{equation}\label{def.problem}\text{Find }k_*\text{ such that } v_{k_*} \leq z<v_{k_*+1};\end{equation} where $\boldsymbol{v}$ is  of size $n+1$ with entries in $[0,1]$ and $v_0 = 0$, $v_n=1$ and $z$ a scalar in $(0,1)$. This problem is ubiquitous in computer science with applications spanning several different fields of computer programming, engineering and mathematics. Variations of (\ref{def.problem}) include searching unbounded lists \cite{bentley},  tables \cite{knuth}, searching continuous functions for a zero \cite{oliveira}, as well as the construction of insertions and deletion procedures in canonical data-structures \cite{bentley2}. 

The standard approach to solve (\ref{def.problem}), commonly known as \textit{binary search} \cite{knuth},  begins with and updates upper and lower bounds $a = 0$ and $b = n$ for the location of the desired index $k_*$. At each step this is done by recursively probing the index $k_{1/2}$ which is defined by rounding
\begin{equation}\label{eq:def_binary_search}
    x_{1/2} \equiv\frac{a+b}{2}
\end{equation}
arbitrarily to the nearest integer, and, by comparing  $v_{k_{1/2}}$ and $z$ it updates $a$ and $b$ accordingly, i.e. if $v_{k_{1/2}}>z$ then $b$ is updated to $b=k_{1/2}$ if $v_{k_{1/2}}<z$ then $a$ is updated to $a=k_{1/2}$ and if $v_{k_{1/2}}=z$ then $a$ and $b$ are updated to $a=k_{1/2}$ and $b=k_{1/2}+1$. The algorithm terminates when the tolerance $\Delta\equiv b-a$ is equal to one. For convenience we display below the general structure of the binary searching algorithm as a while-loop; however, binary searching also admits for-loop formulations and other formulations that exploit computer architecture \cite{cannizzo,schlegel}  to improve computational speed. Here, $a$ and $b$ are initiated at $a=0, b=n$, and, in line (1)  $\tilde{k}$ is taken to be equal to $k_{1/2}$.\pagebreak
\\
{\setlength{\parskip}{0.05em}
\noindent\rule{12.7cm}{0.4pt}
\begin{center}
 \textbf{Algorithm 0: The Bracketing Algorithm}\\\end{center}
Input: $\boldsymbol{v}$ and $z$

while ($b-a>1$), do:

\hspace{10pt} (1). Choose $\tilde{k} \in (a,b)$ and evaluate $v_{\tilde{k}}$;

\hspace{10pt} (2). Update $a$ and $b$ according to the values of $v_a,v_b,v_{\tilde{k}}$ and $z$;\\
Output: $k_* = a$\\
\noindent\rule{12.7cm}{0.4pt}
}

The key feature of binary search is it's \textit{minmax optimality}. That is, it requires at most \begin{equation} \label{eq:minmax}
    N_{1/2} = \left\lceil\log_2 n \right\rceil
\end{equation} queries to locate $k_*$ while no other method can provide the same guarantee in less than $N_{1/2}$ queries. This property is specifically of interest when the computational cost of one query is known to be much higher than the computation of the search procedure itself. This assumption is often made implicitly in the literature, and for the sake of clarity, it is assumed henceforth. 

While binary search is optimal with respect to the worst case metric, \emph{interpolation search} turns out to be a more efficient alternative with respect to expected query complexity if a uniform distribution is assumed, see \cite{perl1,perl2,yao}. Interpolation search is a bracketing algorithm with $\tilde{k}$ in line (1) of Algorithm 0 defined as the linear interpolation between the points $(a,v_a)$ and $(b,v_b)$. More precisely $\tilde{k}$ is taken to be equal to $k_f$ where $k_f$ is defined as the integer closest to
\begin{equation}\label{eq:def_interp_search}
    x_f \equiv \frac{b\cdot (v_a-z) -a\cdot (v_b-z)}{v_a-v_b},
\end{equation}
that lies in between $x_f$ and $x_{1/2}$. The key feature of interpolation search is that if the entries of $\boldsymbol{v}$ and $z$ are sorted samples of a uniform distribution over $[0,1]$, then, interpolation search is optimal with respect to expected query complexity  \cite{yao} and the expected number of queries $\mathbb{E}(N)$ to solve (\ref{def.problem}) is  \begin{equation}\mathbb{E}(N) = O(\log_2 \log_2 n),\end{equation}  which considerably improves on the expected query complexity of binary search.

Although interpolation search enjoys an improved average performance, the improvement comes at the cost of requiring up to $n$ queries to terminate in the worst case. Furthermore, the guarantees on the expected query complexity of interpolation search do not hold if the distributional hypothesis is misspecified. Thus, choosing interpolation search over binary search may come at a high cost since interpolation search may also require a full series of $n$ queries to terminate on the average under different distributions. Is it possible to simultaneously enjoy the benefits of interpolation search with no costs on the minmax peformance of binary search? And furthermore, is it possible to enjoy such benefits without trading off performance under misspecified conditions?

In this paper we answer the above questions affirmatively. To answer the first question we begin by identifying the necessary and sufficient conditions for a searching method to be minmax optimal. Then, we pin-point one specific minmax method, which we name the  \emph{Interpolation, Truncation and Projection Method}, or simply the \emph{ITP Method}. We show that this method  attains the expected query complexity of $O(\log_2 \log_2 n)$ queries under similar assumptions as those required by interpolation search; and, it requires no more iterations than the upperbound of $\lceil\log_2 n\rceil$ of binary searching. Hence, it is both optimal with respect to minmax and average performance at no cost other than the computation of the method itself. To answer the second question, we find lower bounds on the average performance of binary searching under very broad distributional hypothesis and show that  the bisection method can never outperform the ITP method on the average performance by any significant margin. Hence, opting for the ITP method instead of binary searching comes at (almost) no cost even if the distribution is misspecified.

It is worth pointing out that our findings bear close resemblance with those of  \cite{oliveira} for the continuous version of Problem (\ref{def.problem}), i.e. searching the zero of a continuous function. However, despite the resemblance, our findings here are brand new and do not stem from previously known results. In fact, the methods for analysing the discrete searching problem in this paper are much more closely related to those developed in \cite{perl2} than those developed in the literature of numerical analysis. Perhaps more importantly, we believe that our findings here might be of more significance and repercussion than previous results due to the fundamental role that Problem (\ref{def.problem}) and Algorithm 0 plays in the field of computer science, serving as a basis for a much of algorithmic theory and practice.

\paragraph{Paper Outline} The following section, entitled \emph{Main Results},  is divided into three parts. The section begins by characterizing necessary and sufficient conditions for Algorithm $0$ to be minmax optimal, putting forward results analogous to Theorem 2.1 of \cite{oliveira} which were previously unknown in the discrete case. Then, in Subsection \ref{sec:ITP} we describe our main contribution: the ITP method for searching sorted lists with minmax and expected query complexity results in Theorem \ref{the:ITP_charaterization}. These results show that under mild conditions the ITP method can attain an expected query complexity of the same order of interpolation search while retaining the minmax optimality of binary search. In Subsection \ref{sec:robustness}, we calculate lower bounds on the expected query complexity of binary searching under very broad distributional assumptions, and as a consequence we find that our method cannot be outperformed on the average by binary search by more than one or two iterations. Thus, we provide brand new robustness guarantees that cannot be provided by interpolation search. In Section \ref{sec:experiments} we perform extensive experiments on both artificial and real data from which we find that the expected query complexity of the ITP method can be orders of magnitude lower than interpolation search and binary search alike. Finally, in Section \ref{sec:discussions} we summarize and discuss the relevance of our findings and point out applications and future directions of research. 

\section{Main Results}\label{sec:main_results}
Given a sorted list $\boldsymbol{v}$ and a target value $z$, at each iteration $j$ of Algorithm $0$ define $\Delta_j$ as $\Delta_j \equiv b_j-a_j$ and $x_{1/2} \equiv (a_j+b_j)/2$. Then,
\begin{theorem} \label{the:global_minmax}
Algorithm $0$ requires at most $N_{1/2}$ iterations to terminate if and only if at each iteration $j$ we have \begin{equation}\label{eq:minmax_range}|\tilde{k}_j-x_{1/2}|\leq 2^{N_{1/2}-j-1}-\tfrac{1}{2}\Delta_j.\end{equation}
\end{theorem}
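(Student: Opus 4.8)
The plan is to reduce the entire statement to a single worst-case one-step recursion for the bracket length and then read off both implications from it.

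First I would analyze one iteration in isolation. At iteration $j$ the probe $\tilde{k}_j$ splits the bracket $(a_j,b_j)$ into the two candidate sub-brackets of lengths $\tilde{k}_j-a_j$ and $b_j-\tilde{k}_j$, exactly one of which survives according to the comparison of $v_{\tilde{k}_j}$ with $z$. Rewriting these in terms of $x_{1/2}$ gives $\tilde{k}_j-a_j=\tfrac12\Delta_j+(\tilde{k}_j-x_{1/2})$ and $b_j-\tilde{k}_j=\tfrac12\Delta_j-(\tilde{k}_j-x_{1/2})$, so their maximum equals $\tfrac12\Delta_j+|\tilde{k}_j-x_{1/2}|$ and their minimum equals $\tfrac12\Delta_j-|\tilde{k}_j-x_{1/2}|$. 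Two facts then drive everything: (i) an adversarial list can always steer the algorithm into the larger sub-bracket, forcing $\Delta_{j+1}=\tfrac12\Delta_j+|\tilde{k}_j-x_{1/2}|$; and (ii) irrespective of the probe, $\Delta_{j+1}\ge\tfrac12\Delta_j$, so a single step can never shrink the bracket by more than a factor of two. Both quantities are integers, so no rounding slack is lost in either direction.

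For the sufficiency (``if'') direction I would assume the displayed inequality at every iteration and run a forward induction. The inequality is precisely the assertion that $\tfrac12\Delta_j+|\tilde{k}_j-x_{1/2}|\le 2^{N_{1/2}-j-1}$, so by fact (i) the surviving length satisfies $\Delta_{j+1}\le 2^{N_{1/2}-j-1}$ whatever branch is taken. Specializing to $j=N_{1/2}-1$ gives $\Delta_{N_{1/2}}\le 1$; since the length is a positive integer and termination is exactly the event $\Delta=1$, the algorithm must have halted within $N_{1/2}$ iterations.

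For the necessity (``only if'') direction I would argue by contraposition through the same adversary. If the inequality fails at some reachable iteration $j$, then $|\tilde{k}_j-x_{1/2}|>2^{N_{1/2}-j-1}-\tfrac12\Delta_j$, and fact (i) forces $\Delta_{j+1}>2^{N_{1/2}-j-1}$; applying fact (ii) to each of the remaining $N_{1/2}-j-1$ iterations yields $\Delta_{N_{1/2}}\ge \Delta_{j+1}/2^{N_{1/2}-j-1}>1$, so the loop condition $b-a>1$ is still satisfied and the algorithm has not terminated. I expect the one delicate point to be \emph{realizability}: I must exhibit an actual sorted list $\boldsymbol{v}$ and target $z$ whose comparison outcomes reproduce exactly the adversarial sequence of branch choices, rather than merely positing an abstract adversary. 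This is where the argument has to be made with care, but it is routine, since a consistent string of ``go-left/go-right'' decisions only constrains the sign of $v_{\tilde{k}}-z$ at the probed indices, and monotonicity of $\boldsymbol{v}$ lets all these sign constraints be met simultaneously precisely because the surviving bracket retains length at least two (hence at least two admissible values of $k_*$). Combining the two directions then establishes the equivalence.
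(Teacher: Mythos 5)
Your proposal is correct, and its core is the same reduction the paper uses: both proofs rewrite the two candidate sub-bracket lengths as $\tfrac12\Delta_j\pm(\tilde{k}_j-x_{1/2})$, observe that condition (\ref{eq:minmax_range}) is exactly the requirement that both be at most $2^{N_{1/2}-j-1}$, and run a budget induction over iterations. The difference is in how necessity is justified. The paper black-boxes equation (\ref{eq:minmax}): it asserts that once a sub-bracket of size exceeding $2^{N_{1/2}-j-1}$ can survive, the remaining budget is insufficient by the known minmax lower bound, and leaves the worst-case reasoning implicit. You instead prove that lower bound inline, via your fact (ii) (one step shrinks the bracket by at most a factor of two) together with an explicit adversary whose branch choices you show are realizable by an actual sorted list; this makes your argument more self-contained and, frankly, more rigorous on the one point where the paper is informal. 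One caveat: fact (ii) as you state it (``irrespective of the probe'') is false in general, because Algorithm 0's update rule on a tie $v_{\tilde{k}}=z$ sets $a=\tilde{k}$, $b=\tilde{k}+1$ and collapses $\Delta_{j+1}$ to $1$. This does not break your proof, since you only apply fact (ii) along the adversarial run, and your own realizability argument builds the instance so that every probed index satisfies a strict sign constraint $v_{\tilde{k}}\neq z$; but you should state fact (ii) with that qualification (no equality outcomes) rather than unconditionally.
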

\begin{proof}
Given any instance of (\ref{def.problem}),  we may calculate the maximum number of iterations $N_{1/2}$ required by any minmax strategy using equation (\ref{eq:minmax}).  After the first iteration one is left with $N_{1/2}-1$ queries and so, from (\ref{eq:minmax}) we have that $\lceil \log_2 (b_1-a_1) \rceil$ must be at most $N_{1/2}-1$, thus $b_1-a_1$ can be at most $2^{N_{1/2}-1}$. Combining this with the fact that $b_1-a_1$ is less than or equal to $b_0-\tilde{k}_0$ and $\tilde{k}_0-a_0$ it follows that as long as $\tilde{k}_0$ is chosen in such a way that both $b_0-\tilde{k}_0$ and $\tilde{k}_0-a_0$ are less than or equal to $2^{N_{1/2}-1}$, then, from that step onward, Algorithm $0$ can still guarantee termination in $N_{1/2}$ iterations. Requiring that  both $b_0-\tilde{k}_0$ and $\tilde{k}_0-a_0$  be less than or equal to $2^{N_{1/2}-1}$ is equivalent to enforcing $|\tilde{k}_0 - x_{1/2}|\leq 2^{N_{1/2}-1} - \tfrac{1}{2}\Delta_0$. This proves Theorem \ref{the:global_minmax} for iteration $j=0$.

For higher values of $j$ the reasoning is very similar. After steps $0,1,...,j-1$, Algorithm $0$ is left with $N_{1/2}-j$ iterations. Thus, on step $j$, as long as $b_{j+1}-a_{j+1}$ is less than or equal to  $2^{N_{1/2}-j}$, Algorithm $0$ can guarantee termination in at most $N_{1/2}-j$ iterations. Thus, we find similarly that $b_{j+1}-a_{j+1} \leq 2^{N_{1/2}-j}$ is guaranteed when $|\tilde{k}_j - x_{1/2}|\leq 2^{N_{1/2}-j-1} - \tfrac{1}{2}\Delta_j$, and, this completes the proof.
\end{proof}

Theorem \ref{the:global_minmax} identifies the class of strategies that, similar to binary searching, enjoy minmax optimality.  In most situations the set of strategies that satisfy the conditions of Theorem \ref{the:global_minmax} can be quite large. However, when $n$ is equal to $2^T$ for some $T\in\mathbb{N}$ then  we will find that  $2^{N_{1/2}-j-1}-\tfrac{1}{2}\Delta_j$ must be null for $j=0,1,2...$ and thus the class naturally reduces to binary searching. In every other situation  $\tilde{k}$ may be chosen by means of interpolation, randomization or any other technique as long as the distance of $k_j$ to $x_{1/2}$ remains within the ranges established by Theorem \ref{the:global_minmax}. Figure \ref{fig:classes} depicts  two search trees with $n=17$ that have depth $N_{1/2}=5$. Both of these trees have the same minimal depth of binary searching, however they do not subdivide the nodes in half in each query as binary searching would.   Given that $n =17 $ is not a power of $2$ then several such trees with depth $N_{1/2}=5$ exist. 

\begin{figure}
	\centering
	\includegraphics[width=100mm]{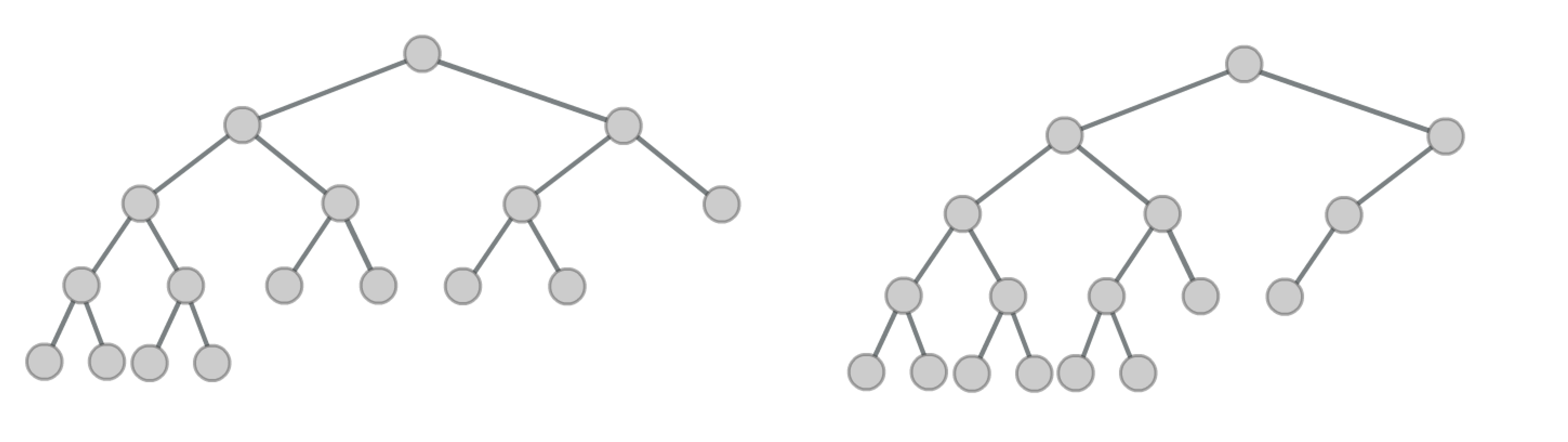}
	\caption{\label{fig:classes} Two searching trees with $17$ nodes and minimal depth of $5$, none of which correspond to binary searching. }
\end{figure}

Before we proceed in displaying our main algorithm we point out two variations of Theorem \ref{the:global_minmax} that may be of interest in different circumstances, one less conservative and one more conservative.  Both variations are motivated by the fact that minmax optimality alone does not avoid certain types of inefficiencies. The first of type arises from the fact that (\ref{eq:minmax_range}) may, at times, be too restrictive and degenerate to bisection steps too early in a run. This may happen if (\ref{eq:minmax_range}) is initiated too small, or, if Algorithm $0$ unluckily makes too many ``bad guesses'' with $\tilde{x}$  producing $\Delta_{j+1}>\Delta_{j}/2$ for several iterations. To avoid this and produce a variation of (\ref{eq:minmax_range}) that is more ``forgiving'' of bad iterations one may upperbound the maximum number of iterations by $N_{\max}$ instead of $N_{1/2}$ with $N_{\max}\geq N_{1/2}+1$. This is attained if and only if in each iteration $j$ we have
\begin{equation}
    \label{eq:almost_minmax}
|\tilde{k}_j-x_{1/2}|\leq 2^{N_{\max}-j-1}-\tfrac{1}{2} \Delta_j.\end{equation}

The second type of inefficiency that may be present is of opposite nature. Minmax optimality may allow for too much freedom. For example, it is possible that Algorithm $0$ after a few iterations reduces $\Delta$ to a sufficiently small size that it could be tackled with a few binary steps. However, minmax optimality allows Algorithm $0$ to waste ``spare iterations'' produced in the beginning of the run. One way to avoid this is to  require that after each iteration, the new subproblem with $b_j-a_j=\Delta_j$ would take no more iterations than binary search would, i.e. that at most $\left \lceil \log_2 \Delta_j \right \rceil$ queries would be used from  step $j$ onward.  This is obtained by enforcing 
\begin{equation}
    \label{eq:local_minmax}
|\tilde{k}_j-x_{1/2}|\leq 2^{\lceil \log_2 \Delta_j\rceil -1}-\tfrac{1}{2} \Delta_j\end{equation}
in every iteration instead of equation (\ref{eq:minmax_range}).

All three versions of (\ref{eq:minmax_range}) may be of interest to software development. If problem (\ref{def.problem}) is generated by a known distribution that allows for the construction of reliable estimators  for the location of $k_*$, as exemplified in the next section, then perhaps the original form (\ref{eq:minmax_range}) might be chosen. If (\ref{eq:minmax_range}) is too small, then the relaxation in (\ref{eq:almost_minmax}) might be an appropriate alternative. In fact allowing for as little as one additional iteration with $N_{\max} = N_{1/2}+1$ we have that equation (\ref{eq:almost_minmax}) will encompass the entirety of $(a,b)$ in the first iterations. Finally, equation (\ref{eq:local_minmax}) might be preferred if  the underlying distribution does not allow for the construction of reliable estimators for the location of $k_*$, or, if the underlying distribution is unknown. In any case, the classes of methods here identified by (\ref{eq:minmax_range}) to (\ref{eq:local_minmax}) offer a rich collection of alternatives to traditional binary searching that simultaneously retain minmax optimality and allow for enough freedom to incorporate interpolation and/or randomized strategies. In the following subsection we will see that even the ``unaltered'' minmax optimality condition in (\ref{eq:minmax_range}) can allow for an improved average performance under standard uniform distribution hypothesis.

\subsection{The ITP Method}\label{sec:ITP}
Let $\kappa_1\in \mathbb{R}_+$ and $\kappa_2\in(\tfrac{1}{2},1)$ be two user provided constants\footnote{Notice that  $\kappa_2$ is defined here to be between $1/2$ and $1$, whereas in \cite{oliveira} it is defined to be between $1$ and $(1+\sqrt{5})/2$. This difference is key and arises from the fact that in continuous settings one is typically interested in vanishing residuals $\Delta$ that are less than or equal to $1$, whereas in discrete scenarios $\Delta$ is always greater than $1$.}. Now define $\sigma$ and $\delta$ as
\begin{equation} \label{eq:sigma_delta}
     \sigma \equiv \text{sign}(x_{1/2}-x_f) \ \ \ \ \ \ \text{ and } \ \ \ \ \ \  \delta \equiv\kappa_1|b-a|^{\kappa_2},
\end{equation} where $x_{1/2}$ and $x_{f}$ are as in (\ref{eq:def_binary_search}) and (\ref{eq:def_interp_search}) respectively. Also, define $x_t$ as
\begin{equation} \label{eq:improved_falsi}
    x_t \equiv x_f + \sigma\delta
\end{equation}
if  $\delta \leq |x_{1/2}-x_f|$ and $x_t = x_{1/2}$ otherwise. Now  define the minmax radius $r_k$ and interval $\mathcal{I}_k$ as
\begin{equation}
r_j \equiv 2^{N_{1/2}-j-1}-\tfrac{b_k-a_k}{2} \ \ \ \ \ \text{ and } \ \ \ \ \
\mathcal{I}_j \equiv \left[ x_{1/2}-r_j \, , \, x_{1/2}+r_j \right]
\label{eq:minmax_range2}
\end{equation}
Now, in each step $j$ define $x_{ITP}$ as the projection of $x_t$ onto $\mathcal{I}_j$, i.e.   \begin{equation}x_{ITP} \equiv \left\{ \begin{array}{ll}
   x_t  &  \text{ if }  |x_t-x_{1/2}|\leq r_j; \\
    x_{1/2} -\sigma r_j   & \text{ otherwise.}
\end{array} \right. \end{equation}
The ITP method then takes $\tilde{k}$ to be equal to $k_{ITP}$ defined as the closest integer to $x_{ITP}$ that lies between $x_{ITP}$ and $x_{1/2}$.

In the following theorem we will assume that 
 $\boldsymbol{v}$ is constructed by sorting $n$ independent samples of a uniformly distributed variable in $[0,1]$. And, that the minmax interval $\mathcal{I}_0$ around $x_{1/2}$ in the first iteration $j=0$ is ``not too small'', i.e. that $2r_0/n$ is not much smaller than one. This avoids the collapsing of  $\mathcal{I}_k$ to $x_{1/2}$ (in which case the ITP method behaves identical to binary searching), and also, as shown in the proof of Theorem \ref{the:ITP_charaterization}, in combination with the other conditions it guarantees that with high probability a steady state condition with super-linear convergence can be reached within a few iterations.

\begin{theorem}\label{the:ITP_charaterization}
If $n$ is sufficiently large and  $\mathcal{I}_0$ is not too small,  the number of iterations $N$ for Algorithm $0$ to terminate satisfies 
 \begin{equation}
     N\leq N_{1/2} \ \ \ \ \ \ \ \text{ and } \ \ \ \ \ \ \ \mathbb{E}(N) \leq \kappa_3\log_2 \log_2 n
 \end{equation}
 for some constant $\kappa_3\in \mathbb{R}$ that depends on $\kappa_1$ and $\kappa_2$ but not on $n$.
\end{theorem}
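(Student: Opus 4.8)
The plan is to prove the two assertions separately, since the worst-case bound $N \le N_{1/2}$ is essentially structural while the average bound is genuinely probabilistic.

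For the minmax bound I would argue directly from Theorem \ref{the:global_minmax}. By construction $x_{ITP}$ is the projection of $x_t$ onto $\mathcal{I}_j = [x_{1/2}-r_j,\,x_{1/2}+r_j]$, so $|x_{ITP}-x_{1/2}| \le r_j = 2^{N_{1/2}-j-1}-\tfrac12\Delta_j$. Since $k_{ITP}$ is taken to be the integer nearest to $x_{ITP}$ lying between $x_{ITP}$ and $x_{1/2}$, it satisfies $|k_{ITP}-x_{1/2}| \le |x_{ITP}-x_{1/2}| \le r_j$, which is precisely condition (\ref{eq:minmax_range}) of Theorem \ref{the:global_minmax} at step $j$. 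Hence the ITP method meets the hypothesis of that theorem at every iteration and terminates in at most $N_{1/2}$ queries, irrespective of the distribution of $\boldsymbol{v}$ and $z$; this half needs no probabilistic input.

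For the expected bound I would adapt the interpolation-search program of \cite{perl2,yao} to the truncated probe $x_t$. Fix an iteration with bracket $[a_j,b_j]$, $\Delta_j=b_j-a_j$. Conditioning on $v_{a_j},v_{b_j}$, the intermediate keys are order statistics of $\Delta_j-1$ i.i.d.\ uniforms on $[v_{a_j},v_{b_j}]$, so $k_*$ has a (conditional) binomial law with standard deviation $O(\sqrt{\Delta_j})$ and mean within $O(1)$ of the interpolation estimate $x_f$ of (\ref{eq:def_interp_search}); a Chernoff/Chebyshev bound then yields $|x_f-k_*| = O(\sqrt{\Delta_j})$ with high probability. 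The quantitative heart of the matter is the exponent $\kappa_2>\tfrac12$: the truncation radius $\delta=\kappa_1\Delta_j^{\kappa_2}$ dominates the interpolation error $O(\sqrt{\Delta_j})$ for large $\Delta_j$, so with high probability $x_t=x_f+\sigma\delta$ lies strictly between $k_*$ and $x_{1/2}$, i.e.\ the probe falls just past $k_*$ on the $x_{1/2}$-side, and the ensuing comparison brings the corresponding endpoint to within $O(\Delta_j^{\kappa_2})$ of $k_*$. This yields a two-step contraction: the first good step pins one endpoint to distance $O(\Delta_j^{\kappa_2})$ of $k_*$; in the next iteration $k_*$ then sits near that endpoint, $\sigma$ flips, and the symmetric argument pins the opposite endpoint, whence $\Delta_{j+2} = O(\Delta_j^{\kappa_2})$ with high probability. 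I would also verify that the projection onto $\mathcal{I}_j$ is inactive in this regime: super-linear contraction drives $\Delta_j$ below $2^{N_{1/2}-j-1}$ much faster than the minmax radius merely halves, so $\mathcal{I}_j$ eventually contains the entire bracket, and the hypothesis that $\mathcal{I}_0$ is not too small ensures this holds from the outset, preventing the first interpolation steps from being clipped to bisection. Writing the recursion as $\log_2\Delta_{j+2} \le \kappa_2\log_2\Delta_j + O(1)$, it unwinds to $O(1)$ after $O(\log_{1/\kappa_2}\log_2 n)=O(\log_2\log_2 n)$ successful rounds.

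Finally I would pass from the high-probability contraction to a bound on $\mathbb{E}(N)$. Because the method is a bracketing algorithm, $\Delta_j$ never increases, so a round that fails the concentration event (large interpolation error, or $k_*$ landing within $\delta$ of $x_{1/2}$, an event of probability $O(\Delta_j^{\kappa_2-1})$) wastes only $O(1)$ iterations without losing ground; the number of attempts to clear each dyadic scale of $\log_2\Delta$ is therefore stochastically dominated by a geometric variable of constant mean, and summing over the $O(\log_2\log_2 n)$ scales gives $\mathbb{E}(N)\le \kappa_3\log_2\log_2 n$ with $\kappa_3$ depending only on $\kappa_1,\kappa_2$, the residual constant-size bracket being cleared in $O(1)$ further queries. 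I expect the main obstacle to be precisely the probabilistic step: rigorously controlling the conditional law of the keys across iterations, since the bracket endpoints at step $j$ are themselves random and correlated with the search history, and making the per-round contraction uniform enough that the geometric-retry bookkeeping produces a clean expectation rather than a mere high-probability statement.
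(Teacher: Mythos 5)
Your plan follows the same architecture as the paper's proof: the minmax half is obtained by observing that the projection onto $\mathcal{I}_j$ enforces condition (\ref{eq:minmax_range}), so Theorem \ref{the:global_minmax} applies; and the probabilistic half rests on the same core facts the paper uses --- the conditional law of $k_*$ is a (translated) binomial with mean $x_f$ and standard deviation $O(\sqrt{\Delta_j})$, Chebyshev shows the truncation radius $\kappa_1\Delta_j^{\kappa_2}$ with $\kappa_2>\tfrac12$ dominates the estimation error so the probe lands on the $x_{1/2}$-side of $k_*$ with high probability, two good iterations contract $\Delta$ to $O(\Delta^{\kappa_2})$, failed rounds are handled by a geometric-retry argument (the paper computes the expected number of iterations per two successes as $2/P_s\to 2$), and the recursion unwinds to $O(\log_2\log_2 n)$.

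The genuine gap is in your treatment of the projection step. You assert that the hypothesis ``$\mathcal{I}_0$ is not too small'' keeps the projection inactive \emph{from the outset}, and you justify its \emph{eventual} inactivity by arguing that superlinear contraction of $\Delta_j$ outpaces the halving of the minmax radius $r_j$. Neither holds as stated. In the paper, ``not too small'' means only that the covered fraction $F(0)=2r_0/\Delta_0$ is bounded away from zero --- it can be, say, $1/4$ --- so $\mathcal{I}_0$ need not contain the bracket, and the early interpolation probes can certainly be clipped (e.g.\ when $k_*$ sits near an endpoint and $x_f$ points outside $\mathcal{I}_0$). And your argument for eventual inactivity is circular: superlinear contraction is available only on iterations where the probe is \emph{not} clipped, which is exactly what you are trying to establish. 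The paper closes this loop with an argument you are missing: (a) a clipped probe still lies between $k_*$ and $x_{1/2}$ whenever $k_t$ does, because the projection moves the probe toward $x_{1/2}$; hence clipped iterations are still successful with high probability; and (b) on a clipped successful iteration $\Delta_{j+1}=\Delta_j-2^{N_{1/2}-j-1}$, which yields
\begin{equation*}
F(j+1)=F(j)\cdot\frac{1}{1-2^{N_{1/2}-j-1}/\Delta_j}>2F(j),
\end{equation*}
since $2^{N_{1/2}-j-1}/\Delta_j>1/2$. So the covered fraction at least doubles on every clipped (successful) iteration, and after $O(\log_2(1/F(0)))=O(1)$ iterations $\mathcal{I}_j$ contains the whole bracket, after which the method coincides with the unprojected $k_t$ and your analysis takes over. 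Without (a) and (b), or some substitute, your proof only covers the stronger hypothesis $F(0)=1$ rather than the theorem as stated.
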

\begin{proof}
The structure of the proof is as follows: We begin by analysing Algorithm $0$ for $\tilde{k} = k_t$ where $k_t$ is the closest integer to $x_t$ that lies between $x_t$ and $x_{1/2}$, i.e. without the projection onto $\mathcal{I}_j$. We will see that for sufficiently large $n$ we have that $k_t$ produces an expected query complexity of the order of $\log_2 \log_2 n$. Then, we include the projection step and verify that, if $\mathcal{I}_0$ is not too small, then with high probability the minmax range $\mathcal{I}_j$ will at least double in each iteration and in a few iterations the full interval $(a,b)$ will be encompassed by $\mathcal{I}_j$. After that point, Algorithm $0$ will behave as if there were no projection step, and thus, the same expected query complexity of $k_t$ is attained.

Before we proceed with the proof, we point out that no attempt is made here to obtain the tightest bounds nor to optimize our choice of $\kappa_1,\kappa_2$ or any other meta-parameter. Instead, whenever possible we opted for the simplest and shortest path to obtain our results, and, overall aim for a proof that is accessible to a university level 
advanced algorithms course.

In order to calculate the expected query complexity of Algorithm $0$ implemented with $\tilde{k} = k_t$ we first calculate  the expected number of iterations $N_{\delta}\in\mathbb{R}$ that Algorithm $0$ requires to reduce an interval of length $\Delta$ to a new interval with length less than or equal to $8\delta =8\kappa_1\Delta^{\kappa_2}$. For this we will use a few facts. First notice that the distance between $k_t$ and $k_*$ can be upper-bounded by:
\begin{equation} \label{eq:errors}
|k_t-k_*|\leq |x_f-k_*|+ \kappa_1\Delta^{\kappa_2}+1.    
\end{equation}
We refer to the first term $|x_f-k_*|$ as the estimation error, the second term $\kappa_1\Delta^{\kappa_2}$ as the truncation error, and the third term ``+1'' is the round-off error.

We say that an iteration $j$ is \emph{successful} when $\Delta_{j+1}\leq \Delta_j/2$, and, it is \emph{unsuccessful} when $\Delta_{j+1}> \Delta_j/2$.   Notice that if $\boldsymbol{v}$ is built by sorting $n$ independent samples from a uniform distribution over $[0,1]$, then the probability of an iteration $j$ with $\tilde{k} = k_t$ being successful is equal to the probability that $k_t$ is between $k_*$ and $x_{1/2}$. Without loss of generality we may assume that $k_t\leq x_{1/2}$, and thus the probability of a successful iteration is equal to the probability that $k_*\leq k_t$. Now the index $k_*$ is equal to the number of entries $v_j$ of $\boldsymbol{v}$ that satisfy  $v_j\leq z$, and since each entry is sampled from a uniform distribution over $[0,1]$, then, in problem (\ref{def.problem})  the variable $k_*$ follows a binomial distribution with expected value of $\mu = n\cdot z$ and with variance $\varsigma^2=n z(1-z)\leq n/4$. During a run of Algorithm 0, given all the data collected up to iteration j, by using scaling arguments we find that the conditional distribution of  $k_*$ will also follow a translated binomial between $a_j$ and $b_j$, with mean $\mu = x_f$ and with variance  $\varsigma^2 \leq \Delta/4$. Thus, from (\ref{eq:errors}) and Chebyshev's inequality $\mathbb{P}(|y-\mu|\geq t)\leq \varsigma^2/t^2$  we find that:
\begin{equation}\label{eq:prob_unsuccessfull_iteration}
    \mathbb{P}(\text{unsuccessful iteration}) \ \ \leq \ \ \mathbb{P}(|k_*-x_f|\geq\delta) \ \  \leq \ \ \frac{1}{4\kappa_1^2\Delta^{2\kappa_2-1}}.
\end{equation}
And thus, since $\kappa_2>1/2$, the probability of an unsuccessful iteration vanishes with larger values of $\Delta$. We will denote by $P_s$ and $P_u$ the probabilities of successful and unsuccessful iterations respectively.

 Now,  from (\ref{eq:prob_unsuccessfull_iteration}) we have that for  large values of $\Delta$ the estimation error is smaller than the truncation error  with high probability. The same is true of the round-off error. Thus we deduce that 
\begin{equation} \label{eq:kt_error}
    |k_t-k_*|\leq 3\kappa_1\Delta^{\kappa_2}
\end{equation}
with high probability for large values of $\Delta$.

Now let us analyse two different scenarios: (i) when $k_*$ is near extremity $a$ or $b$; and (ii) when it is somewhere in the middle. Or, formally: (i) when $k_*-a\leq \kappa_1 \Delta^{\kappa_2}$ or $b-k_*\leq \kappa_1 \Delta^{\kappa_2}$; (ii)  every other case. It is easy to see that in case (i), with high probability, one successful iteration will suffice to reduce $\Delta$ to less than or equal to $4\kappa_1\Delta^{\kappa_2}$. This is a direct consequence of equation (\ref{eq:kt_error}) and (\ref{eq:prob_unsuccessfull_iteration}). Similarly, notice that case (ii) after one iteration  will produce  $k_*-a\leq 4\kappa_1 \Delta^{\kappa_2}$ or $b-k_*\leq 4\kappa_1 \Delta^{\kappa_2}$ with high probability. Thus, after two successful iterations case (ii) will reduce  $\Delta$ to less than or equal to  $7\kappa_1\Delta^{\kappa_2}$. Hence, with high probability, it suffices to obtain two successful iterations in order to reduce $\Delta$ to less than or equal to $8\kappa_1\Delta^{\kappa_2}$, and, the expected number of iterations required to obtain two successes is given by
$$ \mathbb{E}(\text{\# iterations to obtain two successes}) = (2\cdot 1 ) P_s^2+(3\cdot2) P_uP_s^2+(4\cdot 3) P_u^2P_s^2... $$
which, by using the relation $P_s+P_u = 1$ simplifies to 
$$\mathbb{E}(\text{\# iterations to obtain two successes}) = 2/P_s \xrightarrow{\Delta \to \infty} 2.$$
Thus we find that  $N_{\delta}$ approaches 2 as $\Delta$ goes to infinity, and hence for $\Delta$ greater than or equal to some constant $\kappa_4$ (that depends on $\kappa_1$ and $\kappa_2$ alone) we have $N_{\delta}$ less than or equal to $3$. This implies that for large $\Delta$ we have:
\begin{equation}\label{eq:recursive_relation}
    \mathbb{E}(N|\Delta) \leq 3+\mathbb{E}(N|8 \kappa_1\Delta^{\kappa_2}),
\end{equation}
where $\mathbb{E}(N|Z)$ is the expected number of iterations given that there are $Z$ elements in $[a,b]$. Thus applying (\ref{eq:recursive_relation}) recursively we find that $$\mathbb{E}(N|\Delta) \leq 3+3+\mathbb{E}(N| 8\kappa_1(8\kappa_1\Delta^{\kappa_2})^{\kappa_2})$$ and repeating this process $m$ times we find $$ \mathbb{E}(N|\Delta)\leq m\times3+\mathbb{E}(N| (8\kappa_1)^{\frac{\kappa_2^m-1}{\kappa_2-1}}\Delta^{\kappa_2^m}).$$ Thus, the value of $m$ such that $(8\kappa_1)^{\frac{\kappa_2^m-1}{\kappa_2-1}}\Delta^{\kappa_2^m}$ is less than a $\kappa_4$ will give us the expected query complexity of Algorithm $0$ implemented with $\tilde{k}=k_t$. This, of course, reduces to  $$m\leq C_1 + C_2\log_2\log_2\Delta $$
for some $C_1$ and $C_2$ that depend on $\kappa_1$ and on $\kappa_2$ but not on $\Delta$. This completes the deduction of the expected query complexity of Algorithm $0$ implemented with $k_t$. 

What is left now is to verify the effect of the projection step on the behaviour of Algorithm $0$. We start by pointing out that for high values of $\Delta$, due to (\ref{eq:prob_unsuccessfull_iteration}), Algorithm $0$ implemented with $k_t$ generates successful iterations with high probability.  The same is true for the projection of $k_t$ onto $\mathcal{I}_j$, since,  if $k_t$ lies between $k_*$ and $x_{1/2}$ then so will the projection of $k_t$ onto $\mathcal{I}_j$. Thus, with high probability we are left with the smallest amongst the intervals $(a,\tilde{k})$ and $(\tilde{k},b)$ after each iteration. This implies that $\Delta_{j+1}\leq \Delta_j/2$ and, with a little algebra, we can show that the fraction of the interval $(a_j,b_j)$ covered by $\mathcal{I}_j$, which we will denote by $F(j)$, increases from iteration $j$ to iteration $j+1$, i.e. that $$F(j+1) \equiv \frac{2(2^{N_{1/2}-(j+1)-1}-\Delta_{j+1}/2)}{\Delta_{j+1}} \geq  \frac{2(2^{N_{1/2}-j-1}-\Delta_{j}/2)}{\Delta_{j}} \equiv F(j).$$
Furthermore, if $\tilde{k}$ is not equal to $k_t$, then it is the projection onto the minmax disk. Thus, (ignoring rounding effects) we find that $\tilde{k}_j = x_{1/2}\pm (2^{N_{1/2}-j-1}-\Delta_j/2)$. In which case since $\Delta_{j+1}\leq \Delta_j/2$ we must have that:
$$\Delta_{j+1} = \Delta_j - 2^{N_{1/2}-j-1};$$
from which we derive that the fraction of the interval $(a_{j+1},b_{j+1})$ covered by $\mathcal{I}_j$ is given by
$$F(j+1)\equiv$$ $$\frac{2(2^{N_{1/2}-(j+1)-1}-\Delta_{j+1}/2)}{\Delta_{j+1}} = \frac{2(2^{N_{1/2}-j-1}-\Delta_{j}/2)}{\Delta_{j}}\times \frac{1}{1- 2^{N_{1/2}-j-1}/\Delta_j},$$
and thus $F(j+1)> 2F(j)$ since $2^{N_{1/2}-j-1}/\Delta_j$ is greater than $1/2$ given that $\Delta_j$ is not a power of two. Thus, with high probability the fraction of the interval $(a_{j+1},b_{j+1})$ covered by $\mathcal{I}_j$ must at least double in each iteration if $\tilde{k}\neq k_t$. Hence, if the fraction of the interval $(a_{0},b_{0})$ covered by $\mathcal{I}_0$ is not too small, it will take only a few iterations until $\tilde{k}$ can assume any value within $(a,b)$, and will thus coincide with $k_t$ from that iteration onward. \end{proof}

Theorem \ref{the:ITP_charaterization} shows that the ITP method is both minmax optimal and can attain as low as $\log_2\log_2 n$ expected query complexity given that $\mathcal{I}_0$ is not too small. This last condition, as mentioned earlier, can be dropped if minmax optimality is relaxed. In fact it suffices to allow for just one iteration more than $N_{1/2}$ and the ``not too small'' condition is satisfied. Also, it is worth mentioning that one may calculate the expected number of ``gained'' iterations per query and find that it is greater than or equal to one for sufficiently large $n$. Thus, even though $\mathcal{I}_j$ can collapse into binary searching after a few rounds of unsuccessful iterations, this will only happen with low probability since in an average run, the ITP method it will typically accumulate ``spare iterations'' and can afford a few misses quite early in the run.  Also, from (\ref{eq:prob_unsuccessfull_iteration}) we may deduce that the first iterations have the highest probability of being successful since these have the biggest intervals $\Delta$, and hence, the iterations in which Algorithm $0$ has less ``spare iterations'' are the ones less likely to blunder and produce unsuccessfull/wastefull iterations. By the time it has narrowed down the search to smaller intervals, several ``spare'' iterations will be available, and thus it will take many more unsuccessfully iterations for $\mathcal{I}_j$ to degenerate to binary searching.

\subsection{Robustness and Limits}\label{sec:robustness}

It is well known that the expected query complexity of interpolation search is of the order of $\log_2 \log_2 n$ and binary search is of the order of $\log_2 n$ under the uniform distribution assumption, i.e. interpolation search considerably outperforms binary search under the standard hypothesis. However, it is also well known that if the distribution is misspecified, then,  the expected query complexity of interpolation search can reach up to $n$ queries while binary search remains upper bounded by $\lceil \log_2 n\rceil$, i.e. interpolation search is considerably outperformed by binary search under misspecified conditions. In this section we will verify whether the ITP method  suffers from the same drawback or whether it is robust to such changes, i.e. can the ITP method be outperformed by binary searching with respect to average performance\footnote{In the continuous setting this question was answered in Corollary 2.2 of \cite{oliveira}. There, since the bisection method has a fixed expected query complexity of $N_{1/2}$ for any continuous distribution, the worst-case guarantees of the ITP method already ensure that the expected query complexity of the ITP method cannot be outperformed by the bisection method. However, unlike the continuous setting, the expected query complexity of binary searching over lists does depend on  the underlying distribution.}?

We will answer this question by analysing two large classes of distributions over instances of (\ref{def.problem}). The first class , which we will denote by $\mathcal{C}_1$, encompasses all distributions over instances of (\ref{def.problem}) that produce $z \neq v_{k_*}$ with no restriction on how $z$ and $\boldsymbol{v}$ are generated. The second class, denoted $\mathcal{C}_2$, encompasses a large collection of distributions over instances of (\ref{def.problem}) that do produce $z = v_{k_*}$. In particular, the second class includes any distribution that does not limit $k_*$ nor favours any particular $j$ between $1$ and $n$, i.e. it assumes only that $k_*$ that can assume any value from $1$ to $n$ with a uniform probability. 

If Problem (\ref{def.problem}) is generated by a distribution from class $\mathcal{C}_1$, then since the distribution does not produce $z = v_{k_*}$, binary search must require at least $N_{1/2}-1$ iterations to terminate. To see this, first notice that $\Delta_{j}\geq \tfrac{1}{2}(\Delta_{j-1}-1)$, for any value of $\Delta_j= b_j-a_j$ (whether odd or even). By recursion, we find  that $\Delta_{j} \geq \tfrac{1}{2^j} \Delta_0 - \tfrac{1}{2}- \tfrac{1}{4}-...- \tfrac{1}{2^j}$ which in turn is greater than $\tfrac{1}{2^j} \Delta_0 - 1$. Hence, in order for $\Delta$ to be less than or equal to $1$  the number of iterations $N$ must satisfy: $ N \geq \lceil \log_2 \Delta_0\rceil -1$. Thus
\begin{corollary}\label{cor:unlucky_distributions}
If the distribution over instances of (\ref{def.problem}) is such that $z \neq v_{k_*}$ then the expected query complexity of binary searching satisfies:
\begin{equation}
    \mathbb{E}(N)\geq N_{1/2}-1.
\end{equation}
\end{corollary}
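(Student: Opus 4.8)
The plan is to reduce the claim about the expectation $\mathbb{E}(N)$ to a deterministic, per-instance bound. Since the asserted lower bound $N_{1/2}-1$ is a constant depending only on $n$ (through $N_{1/2}=\lceil\log_2 n\rceil$), it suffices to show that \emph{every} instance drawn from $\mathcal{C}_1$ forces binary search to spend at least $N_{1/2}-1$ queries; the bound on the expectation then follows immediately by monotonicity. The structural observation that drives everything is that the defining hypothesis of $\mathcal{C}_1$, namely $z\neq v_{k_*}$, guarantees that $z$ never coincides with any probed entry, so the equality branch of the update in line (2) of Algorithm $0$ is never triggered and the algorithm can never terminate early. It is therefore forced to drive the tolerance $\Delta$ all the way down to $1$ purely by halving steps.

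Next I would set up the recursion on $\Delta_j=b_j-a_j$. At iteration $j$ the probe $k_{1/2}$ is the nearest integer to $x_{1/2}=(a_j+b_j)/2$, which splits $(a_j,b_j)$ into the two pieces $k_{1/2}-a_j$ and $b_j-k_{1/2}$. I would argue by a short parity case analysis: if $\Delta_j$ is even both pieces equal $\Delta_j/2$, while if $\Delta_j$ is odd the pieces are $(\Delta_j-1)/2$ and $(\Delta_j+1)/2$. In either case, and regardless of the arbitrary rounding convention and of which branch the comparison selects, the retained subinterval has length at least $(\Delta_j-1)/2$, giving $\Delta_{j+1}\geq\tfrac{1}{2}(\Delta_j-1)$. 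This is exactly where the no-early-termination fact is essential: without it, the equality branch could collapse $\Delta$ to $1$ in a single step and the lower bound would break down.

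Finally I would unroll this recursion. Iterating $\Delta_{j+1}\geq\tfrac{1}{2}(\Delta_j-1)$ produces $\Delta_j\geq\tfrac{1}{2^j}\Delta_0-\bigl(\tfrac{1}{2}+\tfrac{1}{4}+\cdots+\tfrac{1}{2^j}\bigr)$, and summing the geometric tail (which is strictly less than $1$) gives the clean bound $\Delta_j>\tfrac{1}{2^j}\Delta_0-1$. Since the algorithm terminates only when $\Delta_N=1$, substituting $j=N$ yields $1>\tfrac{1}{2^N}\Delta_0-1$, hence $2^{N+1}>\Delta_0=n$ and $N>\log_2 n-1$. Because $N$ is an integer this sharpens to $N\geq\lceil\log_2 n\rceil-1=N_{1/2}-1$, and as this holds for every instance we conclude $\mathbb{E}(N)\geq N_{1/2}-1$.

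I expect the main obstacle to be the careful justification of the per-step inequality $\Delta_{j+1}\geq\tfrac{1}{2}(\Delta_j-1)$ together with the exclusion of the equality branch: one must verify that the ``round to the nearest integer, breaking ties arbitrarily'' rule never carves off more than one extra unit beyond an even split, and that under $z\neq v_{k_*}$ the comparison of $v_{k_{1/2}}$ against $z$ is always strict so that $k_*$ is never pinned down prematurely. Everything after that is a routine geometric-series manipulation followed by rounding $N$ up to the nearest integer.
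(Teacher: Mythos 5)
Your proof is correct and follows essentially the same route as the paper: the per-step recursion $\Delta_{j+1}\geq\tfrac{1}{2}(\Delta_j-1)$, its unrolling via the geometric series to get $\Delta_j>\tfrac{1}{2^j}\Delta_0-1$, and the integer-rounding conclusion $N\geq\lceil\log_2 n\rceil-1$ are exactly the paper's argument. Your added care in checking that $z\neq v_{k_*}$ rules out the equality branch at every probe, and the explicit parity analysis of the rounding, merely fill in details the paper leaves implicit.
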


The second class of distributions does allow for Problem (\ref{def.problem}) to admit a solution with $z = v_{k_*}$. The class $\mathcal{C}_2$ assumes nothing on how $\boldsymbol{v}$ or $z$ is constructed other than the fact that the solution $k_*$ can assume any value within the range from 1 to n with a uniform probability\footnote{This second constraint is added since otherwise it is easy to construct distributions that can arbitrarily lower the expected query complexity of virtually any method. Taking binary search as an example, if the distribution trivially produces $v_{k_{1/2}}= z$ then the expected query complexity can be as low as one iteration. Thus to exclude trivial cases and arbitrary biases we assume that $k_*$ is equally likely to assume any value between $1$ to $n$.}. In this second case it is useful to consider the graph constructed by placing the first index visited on the root, and, successively branching left with the indices probed in case of $z< v_k$ and branching right when $v_k< z$. Figure \ref{fig:tree} illustrates one such construction. \begin{figure}
	\centering
	\includegraphics[width=70mm]{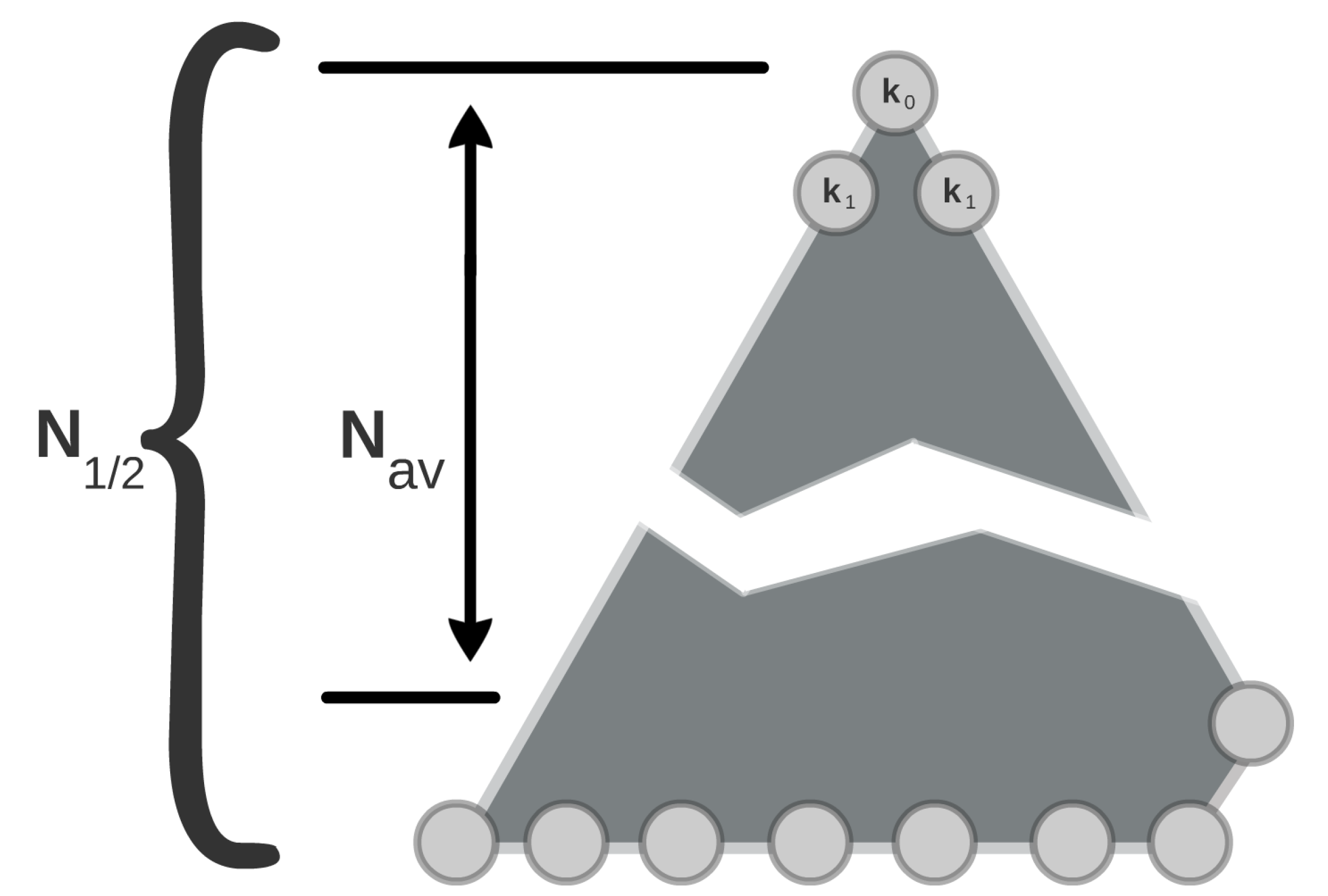}
	\caption{\label{fig:tree} The binary search tree associated with Algorithm $0$. Each node of the tree represents an index $k$ of vector $\boldsymbol{v}$ visited by Algorithm $0$, the height of the tree represents the worst-case complexity of the searching strategy and the average depth of the tree $N_{av}$ is represents the expected query complexity of the searching strategy. }
\end{figure} The depth of the resulting tree measures the maximum number of iterations required for Algorithm $0$ to terminate, and, the average depth of the graph measures the average number of iterations. We will denote the average depth by $N_{av}$ and we decompose $n$ into two factors as $n = 2^{N_{1/2}-1}+q$ for some $q\in\{1,..,2^{N_{1/2}-1}\}$. This way we find that 
\begin{corollary}\label{cor:lucky_distributions} If the distribution over instances of (\ref{def.problem}) is such that $z=v_{k_*}$ and $k_*$ is equally likely to assume any value between $1$ to $n$ then the expected query complexity of binary searching is equal to $N_{av} = N_{1/2}-1 - \delta$ where $\delta = \frac{n-N_{1/2}-2q}{n-1}$, and satisfies
\begin{equation} \label{eq:Nav}
   \mathbb{E}(N) \geq  N_{1/2}-2.
\end{equation}
\end{corollary}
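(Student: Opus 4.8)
The plan is to compute $N_{av}$ directly as the average depth of the binary search tree of Figure~\ref{fig:tree} and then read off the bound. Since $z=v_{k_*}$, binary search halts exactly when it probes the node $k_*$, so the number of queries spent on target $k_*$ equals the depth of $k_*$ in the tree (with the root counted as depth $1$). Because $k_*$ is uniform over the $n-1$ admissible interior positions $\{1,\dots,n-1\}$ --- the endpoints $v_0=0$ and $v_n=1$ being excluded by $z\in(0,1)$ --- the expected query complexity is $\mathbb{E}(N)=N_{av}=\frac{1}{n-1}\sum_k d(k)$, where $d(k)$ denotes the depth of node $k$. Hence the whole statement reduces to evaluating this sum of depths.

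First I would fix the shape of the tree. The key structural fact is that the bisection rule splits every subrange into two parts whose sizes differ by at most one, so the resulting tree is as balanced as possible: it has height $N_{1/2}=\lceil\log_2 n\rceil$, its top $N_{1/2}-1$ levels are completely full (contributing $2^{N_{1/2}-1}-1$ nodes), and the remaining nodes sit on the bottom level $N_{1/2}$. Using the decomposition $n=2^{N_{1/2}-1}+q$, this leaves exactly $q$ nodes on the last level, since $(n-1)-(2^{N_{1/2}-1}-1)=q$. I would also verify that the particular tie-breaking used when rounding $(a+b)/2$ only permutes nodes within a level and therefore does not disturb these per-level counts.

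With the shape fixed, the sum of depths is $\sum_{j=1}^{N_{1/2}-1} j\,2^{j-1}+N_{1/2}\,q$. I would evaluate the first sum with the closed form $\sum_{j=1}^{m} j\,2^{j-1}=(m-1)2^{m}+1$, set $m=N_{1/2}-1$, and substitute $2^{N_{1/2}-1}=n-q$. After routine simplification the total collapses to $N_{1/2}n-2n+2q+1$, so that $N_{av}=\frac{N_{1/2}n-2n+2q+1}{n-1}$. Rewriting this as $N_{1/2}-1$ minus a remainder yields exactly $N_{av}=N_{1/2}-1-\delta$ with $\delta=\frac{n-N_{1/2}-2q}{n-1}$, which is the first claim.

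Finally, the lower bound follows by bounding $\delta$. Substituting $n=2^{N_{1/2}-1}+q$ gives $\delta=\frac{2^{N_{1/2}-1}-q-N_{1/2}}{n-1}$, and since $N_{1/2}\ge 1$ and $q\ge 1$ one checks $2^{N_{1/2}-1}-q-N_{1/2}\le n-1$, i.e. $\delta<1$; hence $\mathbb{E}(N)=N_{1/2}-1-\delta>N_{1/2}-2$. I expect the main obstacle to be the structural step rather than the algebra: making it fully rigorous that the bisection tree has all but its last level saturated --- independently of the rounding convention and of how the $q$ leftover nodes distribute left to right --- so that the per-level counts entering the sum are exactly $2^{j-1}$ and $q$. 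Everything downstream of that is a deterministic computation.
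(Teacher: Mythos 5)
Your proposal is correct, and it supplies exactly the argument the paper leaves out: the paper omits the proof of Corollary~\ref{cor:lucky_distributions} as ``well known,'' pointing to a note on the average node depth of a full binary tree, and your computation --- summing depths over a tree whose top $N_{1/2}-1$ levels are full and whose last level holds the remaining $q$ nodes, then simplifying to $N_{av}=N_{1/2}-1-\delta$ --- is precisely that standard derivation. Your algebra checks out (e.g., for $n=8$, $q=4$ the formula gives $17/7$, matching the perfect tree on $7$ nodes), your reading that $k_*$ effectively ranges over the $n-1$ interior indices is the one consistent with the denominator $n-1$ in $\delta$, and the structural step you flag as the main obstacle does go through cleanly: by induction on the interval size $\Delta$, the probe splits $\Delta$ into $\lfloor \Delta/2\rfloor$ and $\lceil \Delta/2\rceil$ regardless of the rounding convention, and combining the two subtrees (whose heights differ by at most one, with the shorter one perfect) preserves the property that every level except the last is saturated. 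So the only gap is the one you already identified, and it closes with a routine induction.
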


Corollary \ref{cor:lucky_distributions} is well known and it's proof is thus omitted for simplicity \footnote{For completeness sake we point to Prof. PhD Steven Pigeon's proof an analysis of Corollary \ref{cor:lucky_distributions} in \emph{Average node depth in a Full Tree} that can be found in \url{https://hbfs.wordpress.com/2013/05/14/average-node-depth-in-a-full-tree/}, published in 2013.}.  Now combining the above corollaries with the fact that the ITP method requires no more than $N_{1/2}$ iterations to terminate we find that for the classes of distributions in $\mathcal{C}_1$ and $\mathcal{C}_2$ described above:
\begin{theorem} \label{the:ITP_robustness}
The expected query complexity of binary searching can outperform that of the ITP method by at most two iterations.
\end{theorem}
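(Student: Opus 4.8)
The plan is to combine the deterministic worst-case guarantee for the ITP method with the two distribution-dependent lower bounds for binary searching that have already been established. The key observation is that Theorem \ref{the:ITP_charaterization} guarantees $N\leq N_{1/2}$ on \emph{every} instance of (\ref{def.problem}), and hence, taking expectations, the expected query complexity of the ITP method satisfies $\mathbb{E}(N_{ITP})\leq N_{1/2}$ irrespective of the underlying distribution. Since this upper bound is distribution-free, the entire comparison reduces to lower-bounding the expected query complexity of binary searching over the distributions of interest, which is precisely what Corollaries \ref{cor:unlucky_distributions} and \ref{cor:lucky_distributions} supply. Thus no new probabilistic machinery is needed: the theorem is obtained by algebraically subtracting the binary-search lower bounds from the ITP upper bound.

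First I would split the argument according to the two classes $\mathcal{C}_1$ and $\mathcal{C}_2$. For any distribution in $\mathcal{C}_1$, where $z\neq v_{k_*}$, Corollary \ref{cor:unlucky_distributions} gives $\mathbb{E}(N_{bin})\geq N_{1/2}-1$, so that the amount by which binary search can outperform the ITP method is bounded by $\mathbb{E}(N_{ITP})-\mathbb{E}(N_{bin})\leq N_{1/2}-(N_{1/2}-1)=1$ iteration. For any distribution in $\mathcal{C}_2$, where $z=v_{k_*}$ and $k_*$ is uniform on $\{1,\dots,n\}$, Corollary \ref{cor:lucky_distributions} gives the sharper-looking but weaker bound $\mathbb{E}(N_{bin})\geq N_{1/2}-2$, whence $\mathbb{E}(N_{ITP})-\mathbb{E}(N_{bin})\leq N_{1/2}-(N_{1/2}-2)=2$ iterations. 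Taking the larger of the two gaps yields the claimed bound of at most two iterations.

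The argument is essentially a one-line combination of previously proved facts, so the main obstacle is not analytical but one of scope: I would make explicit that the stated robustness guarantee is understood to range over $\mathcal{C}_1\cup\mathcal{C}_2$, since these are exactly the two regimes covered by the two corollaries, and that the worst case of two iterations is driven by the class $\mathcal{C}_2$ in which binary search is allowed to terminate early upon an exact hit $z=v_{k_*}$. It is also worth remarking in passing that the bound is tight only up to the slack in Corollary \ref{cor:lucky_distributions}; since the exact value there is $N_{av}=N_{1/2}-1-\delta$ with $\delta\in[0,1)$, the true gap is at most $2$ but typically closer to $1$, which I would flag to emphasize that the ``two iteration'' figure is a conservative worst case rather than a representative one.
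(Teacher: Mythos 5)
Your proposal is correct and follows essentially the same route as the paper: the paper also obtains Theorem \ref{the:ITP_robustness} by combining the ITP method's deterministic worst-case bound $N\leq N_{1/2}$ with the binary-search lower bounds of Corollaries \ref{cor:unlucky_distributions} and \ref{cor:lucky_distributions} over the classes $\mathcal{C}_1$ and $\mathcal{C}_2$, the two-iteration gap being driven by $\mathcal{C}_2$. Your added remarks on scope and on the slack $\delta$ in Corollary \ref{cor:lucky_distributions} are accurate refinements of the same argument, not a different one.
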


Hence, unlike interpolation search, even under very broad misspecified conditions the ITP method cannot be outperformed by binary searching by any significant margin. Thus, Theorems \ref{the:ITP_charaterization} and \ref{the:ITP_robustness} combined show that by choosing the ITP method over binary searching, not only will Algorithm $0$ enjoy the benefits associated with interpolation search  (the $\log_2 \log_2 n$ complexity over the uniform distribution assumption), but it will also not suffer the drawbacks associated with interpolation search (being outperformed by binary searching under misspecified conditions).

\section{Experimental Results}\label{sec:experiments}
In this section we empirically test the ITP strategy on three experiments and compare it with traditional binary searching and interpolation search. In the first experiment we test the minmax ITP method with varying values of $\kappa_1$ and $\kappa_2$ in order to find the values of $\kappa_1$ and $\kappa_2$ that minimize the expected number of queries under a uniform distribution assumption. The second and third experiments use the values of $\kappa_1$ and $\kappa_2$ found on the first experiment and compare the minmax ITP method with the relaxed version where $N_{\max} = N_{1/2}+1$ and interpolation search over both artificial and real data.

\paragraph{Artificial Data 1} In our first experiment, we search for the values of $\kappa_1$ and $\kappa_2$ that minimize the expected number of iterations required by the minmax ITP method over lists of size $n = 2\times 10^5$. As seen in the proof of Theorem \ref{the:ITP_charaterization}, the behaviour of the ITP method quickly mimics the behaviour of $k_t$ which  depends solely on the values of $\kappa_1$ and $\kappa_2$. We performed $10^4$ Monte Carlo simulations by generating the list $\boldsymbol{v}$ by sorting $n$ independent samples from a uniform distribution over $[0,1]$. The target value $z$ was also sampled from a uniform distribution over $[0,1]$. Table \ref{tab:training_results} shows the empirical average obtained by varying $\kappa_1$ between $0.01$ and $0.78$, and, varying $\kappa_2$ between $0.51$ and $0.99$. 

\begin{table}[ht]
\caption{\label{tab:training_results} Average number of iterations of $10^4$ Monte Carlo simulations of searches in lists of size $n = 2\times 10^{5}$ and $z$ sampled from a uniform distribution between $0$ and $1$. Each column shows the performance of the ITP method with a given value of $\kappa_1$ and each line a fixed value of $\kappa_2$.}
\centering
\begin{tabular}{llcccccccc}
& & \multicolumn{8}{c}{$\kappa_1:$}\\
& & $0.01$  & $0.12$  & $0.23$  & $0.34$  & $0.45$  & $0.56$  & $0.67$  & $0.78$  \ignore{& $0.89$  & $1.00$  } \\
\cmidrule(lr){3-10}
& $0.51$ & \LL{7.44}  & 7.35  & 7.68  & 8.02  & 8.40   & 8.76  & 9.09  & 9.38   \ignore{& 9.67 & 9.94}   \\ 
& $0.56$ & \LL{7.39} &   7.37  & 7.79 &  8.30 &  8.85 &  9.35  & 9.78  & 10.16   \ignore{& 10.50  & 10.80 } \\
& $0.62$ & \LL{7.31} &  7.43 &  8.08  & 8.87 &  9.57 &  10.17  & 10.66  & 11.07  \ignore{ & 11.47 & 11.78}  \\
& $0.67$ & \LL{7.20}  & 7.63 & 8.66  & 9.63  & 10.46  & 11.13 &  11.68  & 12.13 \ignore{ &  12.54 & 12.88}\\ 
$\kappa_2:$& $0.72$ & \LL{7.08} &  8.05 &  9.38 &  10.54 & 11.51  & 12.25  & 12.81  & 13.32 \ignore{&  13.72   & 14.06}\\
& $0.78$ & \LL{6.95} &  8.64 &  10.26 &  11.66  & 12.73  & 13.52  & 14.09 &  14.51 \ignore{ &  14.86 &15.13}\\
& $0.83$ & \LL{\cellcolor[gray]{0.9} 6.87} & 9.35 &  11.40 &  13.00  & 14.09  & 14.73  & 15.19 & 15.57 \ignore{ & 15.88 & 16.05} \\
& $0.88$ &  \LL{6.93}  & 10.27  & 12.78  & 14.42 &  15.23 &  15.60 &  15.93 &  16.30  \ignore{ & 16.68 & 16.97}\\
&$0.94$ & \LL{7.21} &  11.45 &  14.39 &  14.95  & 15.94 &  16.77  & 17.24  & 17.51  \ignore{  & 17.65 & 17.69}\\
&$0.99$ & \LL{7.51}  & 13.03 &  14.55 &  16.53 &  17.49 &  17.69 &  17.69 &  17.69 \ignore{  & 17.69 & 17.69}
\end{tabular}
\end{table}

As can be seen in Table \ref{tab:training_results}, the empirical average was minimized at $\kappa_1 = 0.01$ and $\kappa_2=0.83$. We highlighted the cell located on the first column and on the seventh row to show the number of iterations attained with these values of $\kappa_1$ and $\kappa_2$ which are significantly lower than $N_{1/2} = \lceil \log_2 2\times 10^5 \rceil = 18$. It should also be noted that the average number of iterations remains below $N_{1/2}$ for any value of $\kappa_1$ and $\kappa_2$ as predicted by Theorem \ref{the:global_minmax}. 

\paragraph{Artificial Data 2}
In our second experiment we compare the empirical performance of two versions of the ITP method against interpolation search on lists of various sizes. The first version of the ITP method used is the minmax version analysed in Theorem \ref{the:ITP_charaterization} and the second version is the one that makes use of the relaxation $N_{\max} = N_{1/2}+1$. The average number of iterations required by each method was calculated by averaging the results of 500 Monte Carlo simulations on lists of sizes ranging from $n = 1$ to $n = 2^{18}$ generated by sorting $n$ independent samples from predetermined distributions. The maximum number of iterations required by interpolation search is also reported for comparison with the worst case performance of the ITP method. In Figure \ref{fig:comparison_ITP_others} we plot the number of iterations as a function of the size of the list for lists generated from the uniform distribution and Figure \ref{fig:comparison_ITP_other_again} show the results under different distributions, namely: when the elements of $\boldsymbol{v}$ are samples of (i) a Gaussian distribution, (ii) an exponential distribution, (iii) a triangular distribution and (iv) a step function distribution (two overlapped uniform distributions over different intervals). The Gaussian in (i) was generated in each run with a random mean $\mu$ sampled from a uniform distribution over $[0,1]$ and a fixed variance with $\sigma = 0.01$; the exponential in (ii) was constructed with a parameter of $\lambda = 1$; the triangular distribution was obtained by taking the square root of a uniformly distributed variable; and the step function distribution in (iv) was obtained by sampling from a distribution that is uniform over the intervals $A = [0, 0.75)$ and $B = [0.75, 1)$ with interval $A$ concentrating half of the cumulative probability and $B$ the other half.

\begin{figure}
	\centering
	\includegraphics[angle = -90, width=90mm]{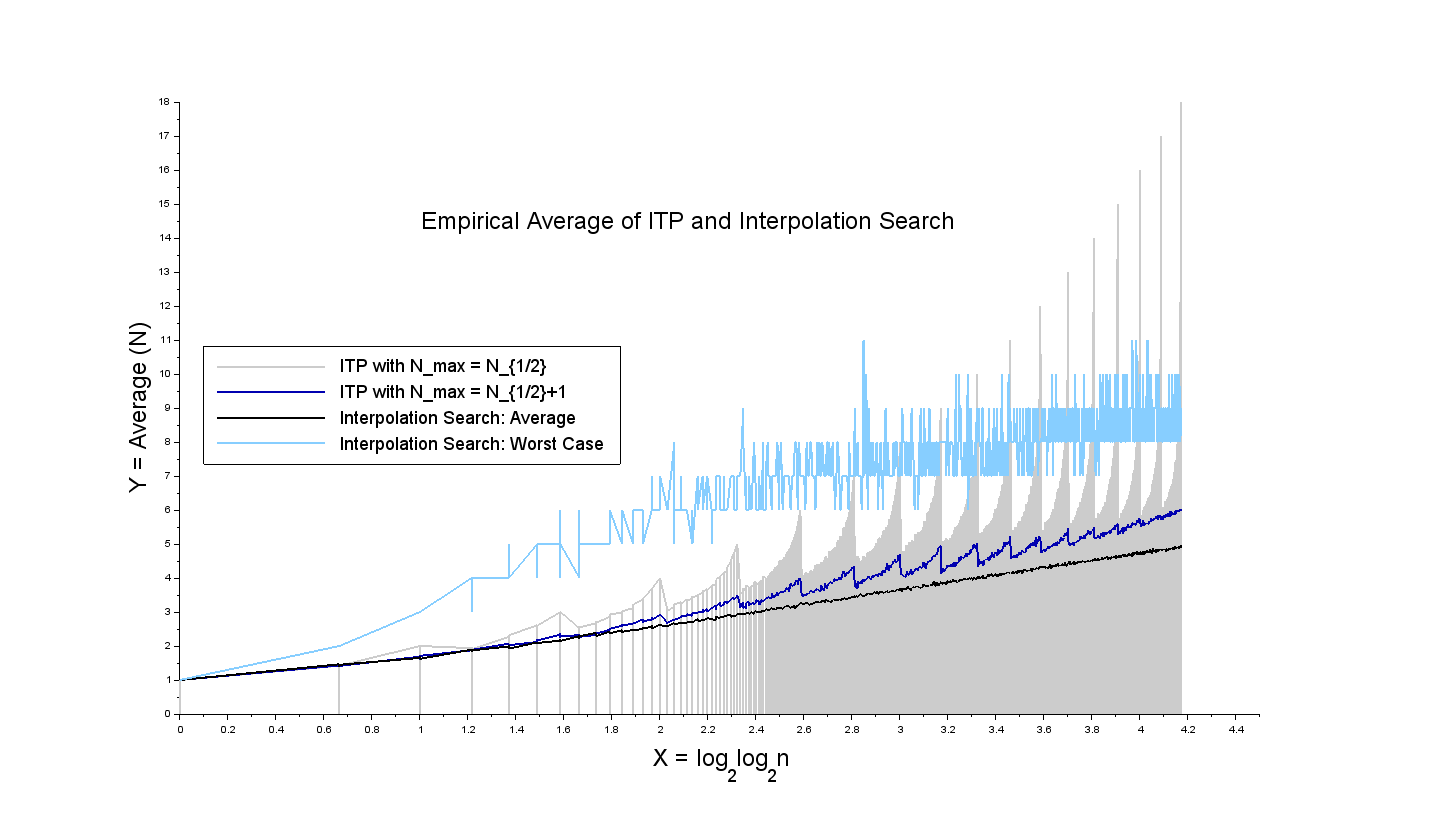}
	\caption{\label{fig:comparison_ITP_others} The average of 500 Monte Carlo simulations comparing two versions of the ITP method and interpolation search  for increasing values of $n$ on data with uniform distribution. In the background, the bar plot in gray displays the average number of iterations required by the minmax version of the ITP method. The lower curve in black shows the average number of iterations required by interpolation search; the dark blue curve above it the average number of iterations required by the ITP method with $N_{\max} = N_{1/2}+1$; and, the light blue curve shows the maximum number of iterations used by interpolation search over all 500 runs.  }
\end{figure} 

In Figure \ref{fig:comparison_ITP_others}, the background bar plot shows the behaviour of the minmax version of the ITP method. As predicted by Theorems \ref{the:global_minmax} and \ref{the:ITP_charaterization}, for values of $n$ in which (\ref{eq:minmax_range}) is not too small, i.e. most of the range, the growth of $E(N)$ is linear with respect to $\log_2 \log_2 n$ similar to interpolation search. The bar plot shows eighteen peaks which correspond to the values of $n$ that are equal to $2^T$ for some $T \in \mathbb{N}$; and thus, for those values of $n$ the number of iterations is linear with $\log_2 n$ and not $\log_2 \log_2 n$. The relaxation of the minmax ITP method with $N_{\max} = N_{1/2}+1$ displayed in dark blue reduces the peaks and obtains a curve that grows linearly with $\log_2 \log_2 n$ in it's entirety just as interpolation search displayed right bellow it in black. The average performance of the ITP method with $N_{\max} = N_{1/2}+1$ when compared with interpolation search, attains an almost identical linear growth with respect to $\log_2 \log_2 n$ that exceeds the number of iterations required by interpolation search by approximately one iteration throughout the range investigated, i.e. the ITP method with  $N_{\max} = N_{1/2}+1$ has a nearly identical expected query complexity as interpolation search under the uniform distribution hypothesis. However, the worst case behaviour of interpolation search is upper-bounded by $n$, i.e. both versions of the ITP method depicted have much better worst case guarantees than interpolation search. The light blue curve overarching the graph depicts the maximum number of iterations required by interpolation search in the 500 runs; and, as can be noticed it exceeded $\log_2 n$ for values of $n$ less than or equal to $2^{2^{3.3}}$ which is approximately $10^3$. Of course, with more runs, interpolation search will demand much more iterations in the worst case.

 \begin{figure}
	\centering
	\includegraphics[angle = -90, width=90mm]{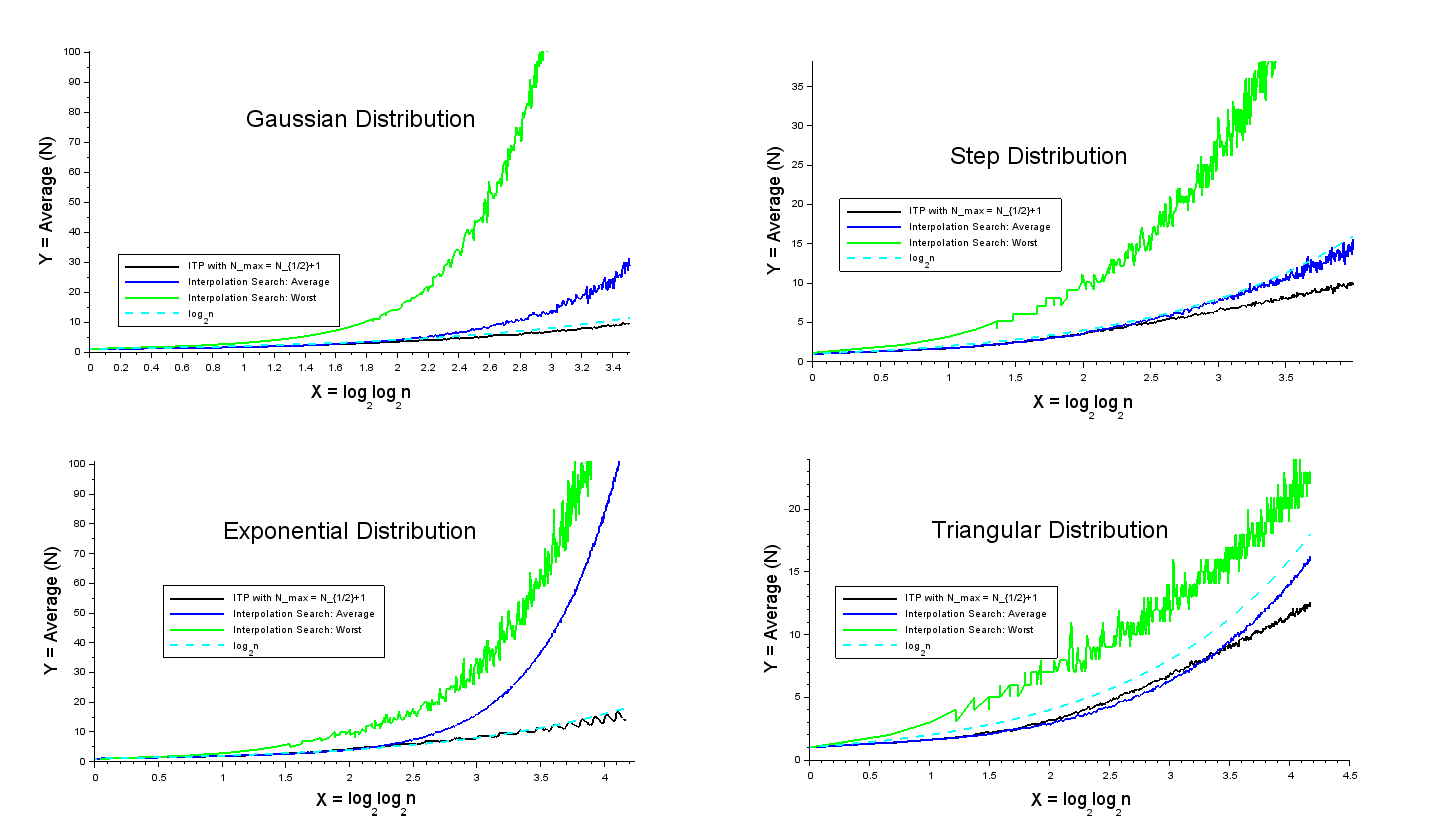}
	\caption{\label{fig:comparison_ITP_other_again} The average of 500 Monte Carlo simulations comparing two versions of the ITP method and interpolation search  for increasing values of $n$. The light blue dashed line provides for reference the value of $\log_2 n$. The lower curve in black shows the average number of iterations required by the ITP method with $N_{\max} = N_{1/2}+1$; the dark blue curve the average number of iterations required by interpolation search; and, the green curve shows the maximum number of iterations used by interpolation search over all 500 runs. }
\end{figure} 

When different distributions are considered then the robustness of the ITP method becomes an interesting feature. As can be seen in Figure \ref{fig:comparison_ITP_other_again}, the average number of iterations of the ITP method with $N_{\max} = N_{1/2}+1$ remained below $\log_2 n$ under all distributions considered. Interpolation search performed much worse than $\log_2 n$ for both the Gaussian distribution and the exponential distribution, and displayed an average performance that seems to be close to $\log_2 n$ under the step function and the triangular distribution considered. The worst case behaviour of interpolation search showed to be much worse than $\log_2 n$ under the four distinct distributions. As depicted in Figure \ref{fig:comparison_ITP_other_again}, under these distributions and others still, interpolation search may have both an average and a worst case performance that require much more iterations than the ITP method by several orders of magnitude. Thus, these experiments show that the ITP method seems to be a much better alternative than both binary searching and interpolation searching when both worst case and average performances are taken into account.

\paragraph{Real Data} In our third experiment we collect a wide variety of real data from publicly available lists of varying sizes and different origins which are specified in the appendix section. To name a few, we have included a list of full names of all public employees of the Brazilian government, a dataset of genome sizes of fungal species, atomic weights, zip codes and others. For each list we calculate the empirical average of the number of iterations required by both the ITP method with $N_{\max} = N_{1/2}+1$ and  interpolation search. In each run we sample $z$ between $v_0$ and $v_n$ with a uniform probability and perform the search with both methods. Four of the twelve lists considered were composed of names rather then numbers, specifically the NASDAQ Acronyms, the English Dictionary, the Family Names and Full Names. These  were converted into numerical lists by taking a base-27 read of each digit and sorting them accordingly. Other natural  approaches that could be used  are the ASCII standard conversion or even a Morse code mapping onto binary numbers. Clearly, the average performance of the ITP method is sensitive to this mapping and hence there is space for improvement. However, this goes beyond the scope of the paper and thus we opted to display only the results for the first approach considered, i.e. the base 27 conversion. Table \ref{tab:real_data} reports the empirical average of $10^3$ runs of the described procedures.

\begin{table}[ht]
\caption{\label{tab:real_data} Average number of iterations required by the ITP method and Interpolation Search. The averages are taken over $10^3$ searches for a target $z$ sampled from a uniform distribution between $v_0$ and $v_n$. We also report the empirical maximum number of iterations required by Interpolation Search over this sample. The simulation capped out the count when more than $1000$ iterations were required, and thus when $1000$ iterations are reached we indicate with a sub-index the number of runs where this cap occurred. Also, since  the ITP method was implemented with $N_{\max}=N_{1/2}+1$,  in the second column under the title ``ITP Search'' we provide the value of $N_{1/2}+1$.  On the bottom lines, the estimates of the mean and the median are displayed in units of $N_{1/2}$.}
\centering
\begin{tabular}{l cc cc cc}
&  \multicolumn{2}{c}{ITP Search} & \multicolumn{2}{c}{Interp Search}\\ \cmidrule(lr){2-3}  \cmidrule(lr){4-5}
&mean & max & mean & max & $N_{1/2}$ & $n$  \\
\cmidrule(lr){1-7}
Thermodynamics Table &  3.6  & 7 &   3.1 &  5 & 6 & 49  \\
Atomic Weights &  3.3 & 7 & 2.8 & 7 & 6 & 54  \\
Fluid Dynamics Chart &  5.8 & 11 &  5.8 & 11 & 10 & 600   \\
Fibonacci Sequence &   8.2 &  11 & 19.8 & 553 & 10 & 700   \\
Genome Sizes &  9.6 & 13 & 8.9 & 27 & 12  & 2352  \\
NASDAQ Acronyms & 10.6 & 15 &  28.8 & 1000$_{18}$ & 14 & 8203  \\
Zip Codes &  10.5 & 18 & 9.8 & 69 & 17 & 81831\\
Family Names & 16.5 &  18 & 90.0 & 1000$_{14}$ & 17 & 88799  \\
English Dictionary & 19.0 & 20 & 247.4 & 1000$_{103}$ & 19 & 370103 \\
Full Names  & 20.6 & 21 & 751.3 & 1000$_{667}$ & 20 & 660276\\
Prime Numbers & 7.2 & 21 & 6.0 &  10 & 20 & 664579  \\
Harmonic Series &  22.3 &  25 & 79.7 & 189 & 24 & $10^7$   \\
\cmidrule(lr){1-7}
\multicolumn{2}{c}{Central Tendency Metrics:}\\
\cmidrule(lr){1-7}
Median & \multicolumn{2}{c}{ $ 0.75 N_{1/2}$} &  \multicolumn{2}{c}{ $ 5.53 N_{1/2}$}  \\
Median  & \multicolumn{2}{c}{ $0.78 N_{1/2}$}   & \multicolumn{2}{c}{ $1.36 N_{1/2}$} 
\end{tabular}

\end{table}

Table \ref{tab:real_data} displays the average number of iterations required by the ITP method side by side with the number of iterations required by interpolation search. The ITP method seems to have a better performance when compared to interpolation search under both the average query complexity criteria and the worst case query complexity criteria. In all instances where interpolation search outperformed the ITP method on the average, it did so by less than $1.21$ iterations, and when the ITP method outperformed interpolation search it did so by up to $730.75$ iterations which is more than 36 times the number of of iterations required by the ITP method. On the average the ITP method required $25\%$ less iterations when compared to binary searching whereas interpolation search required on average more than five times the number of iterations as binary searching across all twelve lists. We point out that even if outliers were excluded from the list (the two most difficult cases for interpolation search for example) interpolation search still attains an empirical average worse than binary search, i.e. interpolation search does not seem to perform well in real data. One possible explanation for this might be the fact that real world data is not generated from uniformly distributed variables, and hence, the robustness guarantees provided by the ITP method seem to be vital for outperforming binary search in real world applications. By analysing the median metric a similar conclusion is reached, i.e. interpolation search performs poorly and the ITP method outperforms binary search.

When considering the worst case performances, since the ITP method in display  made use of the relaxation $N_{\max} = N_{1/2}+1$, then the ITP method never required more than one iteration above $N_{1/2}$, but due to the $O(\log_2 \log_2 n)$ expected query complexity, under favorable conditions it performed less than half the number of iterations of binary searching. On the other hand, interpolation search not only averaged higher iteration counts but it also maxed out the number of iterations with several unsuccessful searches, and hence, it seems to be the least interesting alternative amongst the three when both metrics are taken into consideration.

\paragraph{General Recommendations}   Throughout our experiments (including an extensive number of experiments not reported here) the performance of the ITP method with the relaxation $N_{\max} > N_{1/2}$ seems to give the best results overall. With the relaxation, the ITP method is less sensitive to the value of $n$ but also less sensitive to the choice of $\kappa_1$ and $\kappa_2$. As a rule of thumb we recommend the ITP method with $\kappa_1=0.01$ and $\kappa_2=0.83$ and with the relaxation\footnote{By adopting a non integer value for $N_{\max}$, the maximum number of iterations of Algorithm 0 is of  $\lceil N_{\max} \rceil$. Furthermore, the projection step of the ITP method projects to the interior of $\mathcal{I}_j$ instead of the border, avoiding numerical errors associated to edge cases. Our experiments were performed with $N_{\max} = N_{1/2}+1$, however for practitioners we recommend a non integer $N_{\max}$ such as $N_{\max} = N_{1/2}+0.99$ instead.} of $N_{\max} = N_{1/2}+0.99$, however, if prior knowledge on the distribution over instances of (\ref{def.problem}) is available, or, if there is availability of a training set, then the values of $\kappa_1, \kappa_2$ and $N_{\max}$ can be tested and chosen accordingly. 

In Experiment 2, both interpolation search and the ITP method were implemented under four misspecified conditions when the non-uniform distributions were used to generate $\boldsymbol{v}$. If prior knowledge of the underlying distribution is available, then the behaviour of Algorithm $0$ depicted in Figure \ref{fig:comparison_ITP_others} can be obtained for different distributions by implementing Algorithm $0$ on the transformation of $v_j$ by the cumulative distribution.


 
\section{Discussion}\label{sec:discussions}

In this paper we have identified a novel and yet simple searching method, which we refer to as the ITP method, that attains an expected query complexity of $O(\log_2 \log_2 n)$ iterations and a worst case query complexity of $\lceil \log_2 n \rceil$; i.e. it is optimal with respect to both average and worst case metrics.  Furthermore, we also prove robustness guarantees that show that binary search cannot outperform the ITP method by more than a constant factor even if the distributional hypothesis is misspecified. Hence, the ITP method enjoys the benefits of interpolation search (the improved expected query complexity of $\log_2 \log_2 n$) without the drawbacks associated with it (a lower than binary search expected query complexity when distribution is misspecified). We perform extensive testing on artificial and real data and we find that the ITP method can considerably outperform both the classical binary search method as well as interpolation search. We reach time-savings that range roughly from 25\% to 75\%, depending on the experiment, when compared to binary searching; an overall much better performance than interpolation search when compared across experiments.

Binary searching is a fundamental tool in the field of computer science and has continually been the choice for applications, specifically due to its minmax optimality. Our results show that this preference for binary search, or alternatively for interpolation search, has often been an inefficient one.  The improvements highlighted here have both practical and theoretical implications that directly translate to significant time savings, specifically when the cost of a query is much greater than the time to compute the procedure itself. In short, the ITP method is our recommended improvement to the traditional approach. However, the identified minmax class of methods, which is largely unexplored, is potentially a more significant contribution that may lead to further improvements and the identification of even more efficient methods.

\paragraph{Future work} The problem of searching sorted tables and/or other multidimensional variants are natural instances that may benefit if equivalent results as the ones developed here are found. Another relatively unexplored variation studied in \cite{bentley} is searching through infinite lists. Also, assuming multiple instances of (\ref{def.problem}) to be solved sequentially and generated under one common distribution,  one may ask how to  adapt and improve the solvers in between each resolution to obtain an adaptive/self-improving method.  Finally, the cost of one query is typically assumed to be significantly greater than that of the computation of the searching procedure itself; several interesting questions arise when this assumption is modified.

\appendix

\section{Online material}

Table \ref{tab:data_scource} contains the sources of the twelve lists used in the second experiment. The texts were converted into numerals as explained in the end of Section \ref{sec:experiments} and any additional symbols such as ``*.!;" and others were ignored.

\begin{table}[ht]
\caption{\label{tab:data_scource} The source of the data collected on the fifth of June of 2019.}
\centering
\begin{tabular}{l}
\cmidrule(lr){1-1} 
NASDAQ Acronyms \\ \footnotesize{\url{ftp://ftp.nasdaqtrader.com/symboldirectory}}   \\
Prime Numbers \\ \footnotesize{{\color{brown}(self generated)}}    \\
Atomic Weights \\ \footnotesize{\url{https://www.qmul.ac.uk/sbcs/iupac/AtWt/}}   \\
Zip Codes \\ \footnotesize{\url{http://federalgovernmentzipcodes.us}} \\
Fluid Dynamics Chart \\ \footnotesize{\url{https://engineering.purdue.edu/~wassgren/notes/CompressibleFlowTables.xls}}   \\
Genome Sizes \\ \footnotesize{\url{http://www.zbi.ee/fungal-genomesize/index.php?q}}    \\
Fibonacci Sequence \\ \footnotesize{{\color{brown}(self generated)}}     \\
Thermodynamics Table \\ \footnotesize{\url{https://www.ohio.edu/mechanical/thermo/property_tables/H2O/H2O_TempSat.xls}}    \\
English Dictionary \\ \footnotesize{\url{https://github.com/dwyl/english-words}}  \\
Family Names \\ \footnotesize{\url{https://www.census.gov/topics/population/genealogy/data/2010_surnames.html}}    \\
Harmonic Series \\ \footnotesize{{\color{brown}(self generated)}}   \\
Full Names  \\ \footnotesize{\url{http://www.portaltransparencia.gov.br/servidores}}  \\
\cmidrule(lr){1-1} 
\end{tabular}
\end{table}
Several of the files found in the above links contain multiple columns, specifically the fluid dynamics chart, the genome sizes, the atomic weights and the thermodynamics table. When this is the case  we selected one column arbitrarily and performed all simulations on the selected column.

\bibliography{references}

\end{document}